\DeclareMathOperator*{\sgn}{sgn}
\DeclareMathOperator{\PM}{PM}
\DeclareMathOperator*{\arctanh}{arctanh}
\newtheorem{MyTheorem}{Theorem}
\begin{document}

\title{Partitioned List Decoding of Polar Codes: Analysis and Improvement of Finite Length Performance}

\author{\IEEEauthorblockN{Seyyed Ali Hashemi\IEEEauthorrefmark{1}, Marco Mondelli\IEEEauthorrefmark{2}, S. Hamed Hassani\IEEEauthorrefmark{3}, R\"udiger Urbanke\IEEEauthorrefmark{4}, and Warren J. Gross\IEEEauthorrefmark{1}}\IEEEauthorblockA{\IEEEauthorrefmark{1}McGill University, Canada; emails: \texttt{seyyed.hashemi@mail.mcgill.ca, warren.gross@mcgill.ca}}\IEEEauthorblockA{\IEEEauthorrefmark{2}Stanford University, USA; email: \texttt{mondelli@stanford.edu}}\IEEEauthorblockA{\IEEEauthorrefmark{3}University of Pennsylvania, USA; email: \texttt{hassani@seas.upenn.edu}}\IEEEauthorblockA{\IEEEauthorrefmark{4}EPFL, Switzerland; email: \texttt{ruediger.urbanke@epfl.ch}}}

\maketitle

\begin{abstract}
Polar codes represent one of the major recent breakthroughs in coding theory and, because of their attractive features, they have been selected for the incoming 5G standard. As such, a lot of attention has been devoted to the development of decoding algorithms with good error performance and efficient hardware implementation. One of the leading candidates in this regard is represented by successive-cancellation list (SCL) decoding. However, its hardware implementation requires a large amount of memory. Recently, a partitioned SCL (PSCL) decoder has been proposed to significantly reduce the memory consumption \cite{hashemi_PSCL}. 

In this paper, we examine the paradigm of PSCL decoding from both theoretical and practical standpoints: (i) by changing the construction of the code, we are able to improve the performance at no additional computational, latency or memory cost, (ii) we present an optimal scheme to allocate cyclic redundancy checks (CRCs), and (iii) we provide an upper bound on the list size that allows MAP performance.
\end{abstract}

\begin{keywords}
Code Construction, CRC Selection, Partitioned List Decoder, Polar Codes
\end{keywords}

\section{Introduction}\label{sec:intro}

Polar codes, introduced by Ar{\i}kan \cite{Ari09}, achieve the capacity of any binary memoryless symmetric (BMS) channel with encoding and decoding complexity $\Theta(N \log_2 N)$, where $N$ is the block length of the code. The code construction can be performed with complexity $\Theta(N)$ \cite{TV13con, RHTT}. Furthermore, by taking advantage of the partial order between the synthetic channels, it is possible to construct a polar code by computing the reliability of a sublinear fraction of the synthetic channels \cite{MHU17}. A unified characterization of the performance of polar codes in several regimes is provided in \cite{MHU15unif-ieeeit}. Here, let us just recall the following basic facts: the error probability scales with the block length roughly as $2^{-\sqrt{N}}$ \cite{ArT09}; the gap to capacity scales with the block length as $N^{-1/\mu}$, and bounds on the scaling exponent $\mu$ are provided in \cite{HAU14, GB14, MHU15unif-ieeeit}; and polar codes are not affected by error floors \cite{MHU15unif-ieeeit}. All the results mentioned above hold under the successive-cancellation (SC) decoder originally proposed by Ar{\i}kan in \cite{Ari09}.

A SC list (SCL) decoder with space complexity $O(L N)$ and time complexity $O(L N \log_2 N)$ is presented in \cite{TVa15}, where $L$ is the size of the list. Empirically, the use of several decoding paths yields an error probability comparable to that under optimal MAP decoding with practical values of the list size. In addition, by adding only a few extra bits of cyclic redundancy check (CRC) precoding, the resulting performance is comparable with state-of-the-art LDPC codes. Because of their attractive properties, polar codes have been selected for the eMBB control channel in the incoming 5G standard \cite{3gpp_polar}.

Although SCL significantly improves the error-correction performance of polar codes, it comes at the cost of lower speed and higher area occupation when implemented in hardware. In particular, it was shown in \cite{balatsoukas} that the high area requirement of SCL is mostly dominated by its memory usage. Several techniques have been proposed to speed up SCL \cite{hashemi_SSCL_TCASI,hashemi_FSSCL} without degrading its error-correction performance. However, all these techniques require the same amount of memory as the SCL decoder. The partitioned SCL (PSCL) algorithm in \cite{hashemi_PSCL} breaks the code into constituent codes (partitions) and each constituent code is decoded with the SCL algorithm. It was shown that the memory requirement of PSCL can be significantly reduced as the number of partitions increases. However, as the number of partitions increases, the error-correction performance of PSCL degrades in comparison with SCL at the same list size.

In this paper, we present several improvements on the original PSCL scheme. These improvements are both practical, in the sense that they boost the error-correction performance of the code, and theoretical, in the sense of the analysis and the understanding of the algorithm. In particular, our contributions can be summarized as follows.

\begin{enumerate}

	\item \emph{Improved Code Construction.} Inspired by \cite{MoHaUr14}, we propose a novel code construction which leads to a remarkable gain in the finite length performance. In particular, the idea is to design the polar code for a ``better'' channel, namely for a channel with a larger SNR compared to the channel that is used for transmission. Furthermore, the performance improvement comes at no additional cost in terms of memory, latency, or operational complexity, since we are simply constructing a polar code for a better channel and do not alter any of the encoding or decoding procedures.
	
	\item \emph{CRC Selection Scheme.} We propose a \emph{successive CRC assignment} strategy to select the number of CRC bits. Our approach is based on the successive minimization of the error probability for each partition. As such, this selection strategy is optimal in the sense that it minimizes the total error probability of the PSCL decoding algorithm.
	
	\item \emph{List Size Requirement for MAP Decoding.} We present an upper bound on the size of the list that ensures optimal MAP performance. The proof holds for the special case of the binary erasure channel (BEC), but numerical simulations suggest that the claim holds for the transmission over general channels. The bound exploits the fact that the information bits tend to cluster at the end of the successive decoding process.  
	
\end{enumerate}

The remainder of the paper is organized as follows. After revising some preliminary material about the various decoding algorithms of polar codes in Section \ref{sec:prel}, the following 3 sections contain the contributions of this work: Section \ref{sec:interp} describes the improved code construction; Section \ref{sec:crc} presents the scheme to select the number of CRC bits; and Section \ref{sec:list} provides the upper bound on the size of the list sufficient to ensure MAP performance. Finally, Section \ref{sec:concl} provides some concluding remarks.


\section{Preliminaries} \label{sec:prel}

\subsection{Polar Coding}

A polar code of length $N$ with $K$ information bits and rate $R \triangleq K/N$ is denoted by $\mathcal{P}(N,K)$ and can be constructed by concatenating two polar codes of length $\frac{N}{2}$. This recursive construction can be described with a matrix multiplication as
\begin{equation}
\mathbf{x} = \mathbf{u} \mathbf{B}_N \mathbf{G}^{\otimes n} \text{,} \label{eq:polarEnc}
\end{equation}
where $\mathbf{u} = \{u_0,u_1,\ldots,u_{N-1}\}$ is the sequence of input bits, $\mathbf{x} = \{x_0,x_1,\ldots,x_{N-1}\}$ is the sequence of coded bits, $\mathbf{B}_N$ is the bit-reversal permutation matrix, $\mathbf{G}^{\otimes n}$ is the $n$-th Kronecker product of the polarizing matrix $\mathbf{G} = \bigl[\begin{smallmatrix} 1&0\\ 1&1 \end{smallmatrix} \bigr]$, and $n = \log_2 N$.

The transformation \eqref{eq:polarEnc} is \emph{polarizing} in the sense that the synthetic channels seen by the input bits tend to become either completely noiseless or completely noisy. Thus, the information bits are assigned to the positions of $\mathbf{u}$ corresponding to the best $K$ synthetic channels. The remaining positions of $\mathbf{u}$ are ``frozen'' to predefined values that are known at the decoder.  This set of frozen positions is denoted by $\mathcal{F}$. The resulting $\mathbf{u}$ is coded into $\mathbf{x}$ using \eqref{eq:polarEnc}; then, $\mathbf{x}$ is modulated and transmitted through the channel. Throughout this paper, we consider Binary Phase-Shift Keying (BPSK) modulation.

\begin{figure}
\centering
\begin{tikzpicture}[scale=.8]
\newcommand\Square[1]{+(-#1,-#1) rectangle +(#1,#1)}

\newcommand\DoubleLine[7][3pt]{%
  \path(#2)--(#3)coordinate[pos=0.9](h1)coordinate[pos=0.1](h2);
  \draw[#4]($(h1)!#1!90:(h2)$)--($(h2)!#1!-90:(h1)$) node [midway, above, sloped] {#6};
  \draw[#5]($(h1)!#1!-90:(h2)$)--($(h2)!#1!90:(h1)$) node [midway, below, sloped] {#7};
}

\fill (5.75,2.625) \Square{.125};

\DoubleLine{5.75,2.625}{3.5,4.125}{<-}{->}{\scriptsize $\bm{\alpha}_2^{0\to 3}$}{\scriptsize $\bm{\beta}_2^{0\to 3}$}
\DoubleLine{5.75,2.625}{3.5,1.125}{<-}{->}{\scriptsize $\bm{\alpha}_2^{4\to 7}$}{\scriptsize $\bm{\beta}_2^{4\to 7}$}

\draw (5.75,2.625) -- (3.5,4.125); 
\draw (5.75,2.625) -- (3.5,1.125);

\fill (3.5,4.125) \Square{.125};
\fill (3.5,1.125) \Square{.125};

\draw (3.5,4.125) -- (1.75,4.875); 
\draw (3.5,4.125) -- (1.75,3.375); 
\draw (3.5,1.125) -- (1.75,1.875); 
\draw (3.5,1.125) -- (1.75,0.375); 

\DoubleLine{3.5,4.125}{1.75,4.875}{<-}{->}{\scriptsize $\bm{\alpha}_1^{0\to 1}$}{\scriptsize $\bm{\beta}_1^{0\to 1}$}

\fill (1.75,4.875) \Square{.125};
\fill (1.75,3.375) \Square{.125};
\fill (1.75,1.875) \Square{.125};
\fill (1.75,0.375) \Square{.125};

\draw (1.75,4.875) -- (.25,5.25);
\draw (1.75,4.875) -- (.25,4.5);
\draw (1.75,3.375) -- (.25,3.75);
\draw (1.75,3.375) -- (.25,3);
\draw (1.75,1.875) -- (.25,2.25);
\draw (1.75,1.875) -- (.25,1.5);
\draw (1.75,0.375) -- (.25,.75);
\draw (1.75,0.375) -- (.25,0);

\fill (.25,5.25) \Square{.125};
\fill (.25,4.5) \Square{.125};
\fill (.25,3.75) \Square{.125};
\fill (.25,3) \Square{.125};
\fill (.25,2.25) \Square{.125};
\fill (.25,1.5) \Square{.125};
\fill (.25,.75) \Square{.125};
\fill (.25,0) \Square{.125};

\node at (-.25,5.25) {$\hat{u}_0$};
\node at (-.25,4.5) {$\hat{u}_1$};
\node at (-.25,3.75) {$\hat{u}_2$};
\node at (-.25,3) {$\hat{u}_3$};
\node at (-.25,2.25) {$\hat{u}_4$};
\node at (-.25,1.5) {$\hat{u}_5$};
\node at (-.25,.75) {$\hat{u}_6$};
\node at (-.25,0) {$\hat{u}_7$};

\node at (6.5,2.625) {\scriptsize $\bm{\alpha}_3^{0\to 7}$};

\end{tikzpicture}
\caption{Polar code SC decoding tree for $N=8$.}
\label{fig:pcDec}
\end{figure}

\subsection{Successive-Cancellation (SC) Decoding}
\label{sec:polar:SCDec}

SC decoding can be represented on a binary tree as shown in Fig. \ref{fig:pcDec} for a polar code of length $8$. The decoder is given as input the vector $\bm{\alpha}_n^{0 \to N-1} = \{\alpha_n^0,\alpha_n^1,\ldots,\alpha_n^{N-1}\}$ of logarithmic likelihood ratio (LLR) values. At each level $t$ of SC decoding, the LLR values are calculated as
\begin{align}
\alpha_t^i &= 2\arctanh \left(\tanh\left(\frac{\alpha_{t+1}^i}{2}\right)\tanh\left(\frac{\alpha_{t+1}^{i+2^{t}}}{2}\right)\right) \text{,} \label{eq:Ffunc} \\
\alpha_t^{i+2^t} &= \alpha_{t+1}^{i+2^{t}} + (1-2\beta_{t}^i)\alpha_{t+1}^i \text{,}
\label{eq:Gfunc}
\end{align}
and the hard-decision estimates $\bm{\beta}_t$ are calculated as
\begin{align}
\beta_t^i &= \beta_{t-1}^{i} \oplus \beta_{t-1}^{i+2^t} \text{,} \label{eq:betaF} \\
\beta_t^{i+2^t} &= \beta_{t-1}^{i+2^t} \text{,} \label{eq:betaG}
\end{align}
where $\oplus$ is the bitwise XOR operation. Eventually, at a leaf node, the $i$-th bit $\hat{u}_i$ is estimated as
\begin{equation}
\hat{u}_i = \beta_{0}^{i} =
  \begin{cases}
    0 \text{,} & \text{if } i \in \mathcal{F} \text{ or } \alpha_{0}^{i}\geq 0\text{,}\\
    1 \text{,} & \text{otherwise.}
  \end{cases} \label{eq:uEst}
\end{equation}
A hardware-friendly formulation of \eqref{eq:Ffunc} can be written as
\begin{equation}
\alpha_t^i = \sgn(\alpha_{t+1}^i)\sgn(\alpha_{t+1}^{i+2^{t}})\min(|\alpha_{t+1}^i|,|\alpha_{t+1}^{i+2^{t}}|) \text{,} \label{eq:FfuncHF}
\end{equation}
which introduces negligible error-correction performance loss in comparison with (\ref{eq:Ffunc}), see \cite{LRSG13}. In this paper, we use (\ref{eq:FfuncHF}) and (\ref{eq:Gfunc}) in the numerical simulations.
Let us consider that the LLR values are quantized with $Q_\alpha$ bits. The memory requirement of SC decoding can be calculated as \cite{LRSG13}
\begin{equation}
M_{\text{SC}} = \left(2N-1\right) Q_\alpha + N-1 \text{.} \label{eq:memSC}
\end{equation}

\subsection{Successive-Cancellation List (SCL) Decoding} \label{sec:polar:SCLDec}

In order to improve the error-correction performance, the SCL decoder with list size $L$, denoted by SCL$(L)$, employs $L$ SC decoders in parallel. For $i\in \{0, \ldots, N-1\}$, when the $i$-th bit $\hat{u}_i$ needs to be estimated, instead of (\ref{eq:uEst}), the SCL decoder creates two paths corresponding to the decisions $\hat{u}_i=0$ and $\hat{u}_i=1$. In order to keep $L$ candidates out of the $2L$ generated paths, a path metric ($\PM$) based on LLRs was developed in \cite{balatsoukas}. A hardware-friendly formulation of such a path metric can be stated as
\begin{align}
&\PM_{{-1}_l} = 0 \text{,} \nonumber \\
&\PM_{{i}_l} = \begin{cases}
    \PM_{{i-1}_l} + |\alpha_{0_l}^i| \text{,} & \text{if } \hat{u}_{i_l} \neq \frac{1-\sgn\left(\alpha_{0_l}^i\right)}{2}\text{,}\\
    \PM_{{i-1}_l} \text{,} & \text{otherwise,}
  \end{cases} \label{eq:PMHF}
\end{align}
where $l$ is the path index.

Eventually, the SCL decoder outputs a list of $L$ candidate codewords. A CRC can be used to assist the SCL decoder in order to find the correct codeword in the list of candidates.
Let us assume the path metric values are quantized with $Q_{\PM}$ bits. The memory requirement of SCL decoding can be derived as \cite{balatsoukas}
\begin{equation}
M_{\text{SCL}} = \left(N+\left(N-1\right)L\right) Q_\alpha + LQ_{\PM} + \left(2N-1\right)L \text{.} \label{eq:memSCL}
\end{equation}

\subsection{Partitioned Successive-Cancellation List (PSCL) Decoding} \label{sec:polar:PSCLDec}

\begin{figure}
\centering
\begin{tikzpicture}[scale=1]
\newcommand\Square[1]{+(-#1,-#1) rectangle +(#1,#1)}

\newcommand\DoubleLine[7][3pt]{%
  \path(#2)--(#3)coordinate[pos=0.9](h1)coordinate[pos=0.1](h2);
  \draw[#4]($(h1)!#1!90:(h2)$)--($(h2)!#1!-90:(h1)$) node [midway, above, sloped] {#6};
  \draw[#5]($(h1)!#1!-90:(h2)$)--($(h2)!#1!90:(h1)$) node [midway, below, sloped] {#7};
}

\fill (5.75,2.625) \Square{.125};

\DoubleLine{5.75,2.625}{3.5,4.125}{<-}{->}{\scriptsize $\bm{\alpha}_{n-1}^{0\to \frac{N}{2}-1}$}{\scriptsize $\bm{\beta}_{n-1}^{0\to \frac{N}{2}-1}$}
\DoubleLine{5.75,2.625}{3.5,1.125}{<-}{->}{\scriptsize $\bm{\alpha}_{n-1}^{\frac{N}{2}\to N-1}$}{\scriptsize $\bm{\beta}_{n-1}^{\frac{N}{2}-1\to N-1}$}

\draw (5.75,2.625) -- (3.5,4.125); 
\draw (5.75,2.625) -- (3.5,1.125);

\fill (3.5,4.125) \Square{.125};
\fill (3.5,1.125) \Square{.125};

\draw (3.5,4.125) -- (1.75,4.875); 
\draw (3.5,4.125) -- (1.75,3.375); 
\draw (3.5,1.125) -- (1.75,1.875); 
\draw (3.5,1.125) -- (1.75,0.375); 

\DoubleLine{3.5,4.125}{1.75,4.875}{<-}{->}{\scriptsize $\bm{\alpha}_{n-2}^{0\to \frac{N}{4}-1}$}{\scriptsize $\bm{\beta}_{n-2}^{0\to \frac{N}{4}-1}$}

\draw (1.75,4.875)++(-1.5,-.5) rectangle ++(1.5,1);
\draw (1.75,3.375)++(-1.5,-.5) rectangle ++(1.5,1);
\draw (1.75,1.875)++(-1.5,-.5) rectangle ++(1.5,1);
\draw (1.75,0.375)++(-1.5,-.5) rectangle ++(1.5,1);

\node at (1,4.875) {SCL};
\node at (1,3.375) {SCL};
\node at (1,1.875) {SCL};
\node at (1,0.375) {SCL};

\node at (6.5,2.625) {\scriptsize $\bm{\alpha}_n^{0\to N-1}$};

\end{tikzpicture}
\caption{PSCL tree structure for $P=4$.}
\label{fig:pcPDec} 
\end{figure}

SCL-based decoders have large memory requirements, which results in a large area occupation in the hardware implementation. To overcome this issue, the PSCL decoder was proposed in \cite{hashemi_PSCL}. As shown in Fig.~\ref{fig:pcPDec}, the PSCL decoding tree is divided into $P$ subtrees, called partitions: each partition is a polar code of length $N/P$ and it is decoded by using a CRC-aided SCL decoder with list size $L$. We denote the PSCL decoder with $P$ partitions and list size $L$ as PSCL($P$,$L$).

At the end of each partition, a CRC can be used to select the codeword that is passed to the next partition. In case there is more than one compatible codeword, the one with the largest path metric is selected. As a result, it is not necessary to store $L$ full trees as in SCL, but only $L$ copies of the part of the tree contained in the partitions. Moreover, memory can be shared among the $P$ partitions, which results in significant savings as $P$ increases.

The memory requirement of PSCL with $P$ partitions can be calculated as \cite{hashemi_PSCL}
\begin{align}
M_{\text{PSCL}} = & \left(\sum_{k = 0}^{\log_2P}\frac{N}{2^k} + L\left(\frac{N}{P}-1\right)\right)Q_{\alpha}\nonumber\\
&+ LQ_{\PM} + \sum_{k = 1}^{\log_2P}\frac{N}{2^k} + L\left(\frac{2N}{P}-1\right) \text{,} \label{eq:memPSCL}
\end{align}
where $2 \leq P < N$. It should be noted that in (\ref{eq:memPSCL}), only one copy of the top $\log_2 P$ levels of the tree has to be stored, whereas for the bottom part, $L$ copies are needed. The savings in memory bits guaranteed by the PSCL decoder are summarized in Fig.~\ref{fig:memReq} for $\mathcal{P}(1024,512)$ with $L=8$, $Q_{\alpha} = 6$ and $Q_{\PM} = 8$. Note that the memory requirement of PSCL($P$,$8$) quickly become close to those of the standard SC decoder as $P$ increases. However, this memory saving is obtained at the cost of a higher error probability. Fig.~\ref{fig:FER1kL8comp} shows the degradation in the frame error rate (FER) performance of PSCL, as the number of partitions increases.

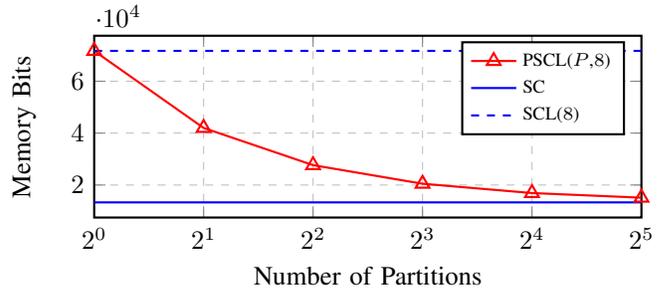
\begin{figure}
  \centering
	\begin{tikzpicture}

\begin{axis}[
	scale = 1,
    xmode=log,
    log basis x=2,
    height=4cm,
    width=\columnwidth,
    xtick=data,
    xlabel={Number of Partitions},
    ylabel={Memory Bits}, ylabel style={yshift=-0.85em},
    legend pos=north east,
    grid=both,
    ymajorgrids=true,
    xmajorgrids=true,
    grid style=dashed,
    xmin = 1,
    xmax = 32,
    thick,
    mark size = 3,
    legend style={
      font=\fontsize{7pt}{7.2}\selectfont,
    },
    legend cell align=left,
]

\addplot[
    color=red,
    mark=triangle,
]
table {
1 71688
2 41992
4 27656
8 20488
16 16904
32 15112
};
\addlegendentry{PSCL($P$,$8$)}

\addplot[
    color=blue,
]
table {
1 13305
32 13305
};
\addlegendentry{SC}

\addplot[
    color=blue,
    dashed,
]
table {
1 71688
2048 71688
};
\addlegendentry{SCL($8$)}

\end{axis}
\end{tikzpicture}
  \caption{Memory bits required by PSCL($P$,$8$) as a function of the number of partitions $P$. The PSCL decoder provides a significant improvement in the memory requirement compared to the SCL decoder and it approaches the memory required by the SC decoder for practical values of $P$.}
  \label{fig:memReq}
\end{figure}

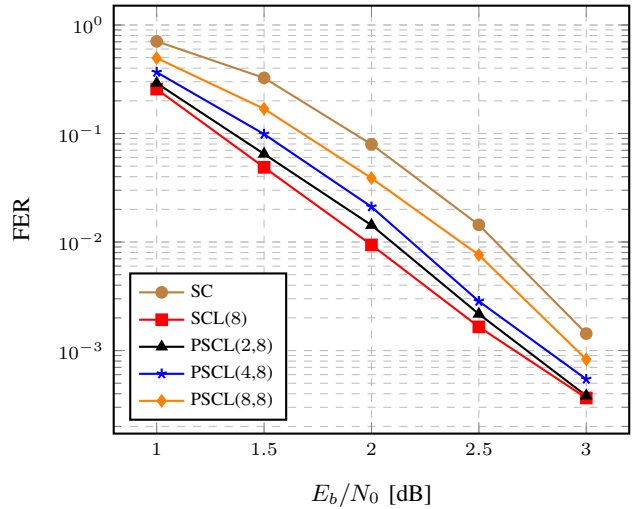
\begin{figure}
  \centering
	\begin{tikzpicture}
  \pgfplotsset{
    label style = {font=\fontsize{9pt}{7.2}\selectfont},
    tick label style = {font=\fontsize{7pt}{7.2}\selectfont}
  }

\begin{axis}[
	scale = 1,
    ymode=log,
    xlabel={$E_b/N_0$ [\text{dB}]},
    ylabel={FER},
    grid=both,
    ymajorgrids=true,
    xmajorgrids=true,
    grid style=dashed,
    thick,
    legend pos=south west,
    legend cell align=left,
]

\addplot[
color=brown,
mark=*,
]
table {
1 0.7068
1.5 0.3253
2 0.0795
2.5 0.0144
3 0.00143279
};
\addlegendentry{\scriptsize SC}

\addplot[
color=red,
mark=square*,
]
table {
1 0.2564
1.5 0.0488
2 0.0094
2.5 0.00165229
3 0.000365922
};
\addlegendentry{\scriptsize SCL($8$)}

\addplot[
color=black,
mark=triangle*,
]
table {
1 0.291
1.5 0.0648
2 0.0143
2.5 0.00217174
3 0.000383349
};
\addlegendentry{\scriptsize PSCL($2$,$8$)}

\addplot[
color=blue,
mark=star,
]
table {
1 0.3669
1.5 0.0987
2 0.0211
2.5 0.00284503
3 0.000544662
};
\addlegendentry{\scriptsize PSCL($4$,$8$)}

\addplot[
color=orange,
mark=diamond*,
]
table {
1 0.4958
1.5 0.1689
2 0.0388
2.5 0.0076
3 0.000829696
};
\addlegendentry{\scriptsize PSCL($8$,$8$)}

\end{axis}
\end{tikzpicture}
  \caption{FER comparison between SC, SCL(8), and PSCL($P$, 8) for $P\in \{2, 4, 8\}$ when no CRC is used. The polar code is $\mathcal{P}(1024,512)$ and it is optimized for SNR $=2$ dB.}
  \label{fig:FER1kL8comp}
\end{figure}


\section{Improved Code Construction} \label{sec:interp}

In this section, we describe an improved code construction that allows to obtain a gain in the error-correction performance at no additional complexity cost. 

Denote by $W ={\rm BAWGN(SNR}^*)$ the Binary Additive White Gaussian Noise channel with SNR equal to ${\rm SNR}^*$ and let $\mathcal{P}_{\alpha}(N, K)$ be the polar code of length $N$ and rate $K/N$ designed for the  transmission over $W_\alpha={\rm BAWGN(SNR}^* /\alpha)$. In words, when $\alpha=1$, $\mathcal{P}_{\alpha}(N, K)$ is the polar code designed for $W$ and, as $\alpha$ goes from $1$ to $0$, $\mathcal P_\alpha(N, K)$ is a polar code designed for better and better channels.

Consider the transmission of the family of polar codes $\{\mathcal P_\alpha(N, K) : \alpha \in [0, 1]\}$ over the channel $W$. Note that the transmission channel is fixed, while the codes of the family are designed for different channels as $\alpha$ varies. Empirically, one observes that the error probability under MAP decoding decreases as $\alpha$ goes from $1$ to $0$. However, the error probability under SC decoding increases as $\alpha$ goes from $1$ to $0$. The latter is due to the fact that the frozen positions of a polar code are chosen in order to minimize its error probability under SC decoding. Starting from these two observations, in \cite{MoHaUr14} a trade-off between complexity and performance is developed by considering practical decoders (e.g., SCL). The trade-off comes from the fact that, as the decoder becomes more complex (e.g., the list size increases), the best performance is achieved for smaller values of $\alpha$.

\begin{figure}
  \centering
	\begin{tikzpicture}
  \pgfplotsset{
    label style = {font=\fontsize{9pt}{7.2}\selectfont},
    tick label style = {font=\fontsize{7pt}{7.2}\selectfont}
  }

\begin{axis}[
	scale = 1,
    ymode=log,
    xlabel={$E_b/N_0$ [\text{dB}]},
    ylabel={FER},
    ymin=2e-5,
    ymax=1,
    grid=both,
    ymajorgrids=true,
    xmajorgrids=true,
    grid style=dashed,
    thick,
    legend pos=south west,
    title={PSCL($2$,$8$)},
    width=0.5\columnwidth, height=7.0cm,
    legend to name=legend1kL8P2P4,
    legend columns=2,
    legend cell align=left,
]

\addplot[
color=black,
mark=square*,
]
table {
1 0.291
1.5 0.0648
2 0.0143
2.5 0.00217174
3 0.000383349
};
\addlegendentry{\scriptsize Design SNR $=2$ dB}

\addplot[
color=blue,
mark=*,
]
table {
1 0.2892
1.5 0.0561
2 0.0076929
2.5 0.00119049
3 0.000162712
};
\addlegendentry{\scriptsize Design SNR $=3$ dB}

\addplot[
color=red,
mark=star,
]
table {
1 0.3784
1.5 0.0817
2 0.00760681
2.5 0.000455651
3 3.5e-5
};
\addlegendentry{\scriptsize Design SNR $=4$ dB}

\addplot[
color=brown,
mark=diamond*,
]
table {
1 0.5016
1.5 0.1423
2 0.0151
2.5 0.0008398
3 2.9e-5
};
\addlegendentry{\scriptsize Design SNR $=5$ dB}

\end{axis}
\end{tikzpicture}
	\begin{tikzpicture}
  \pgfplotsset{
    label style = {font=\fontsize{9pt}{7.2}\selectfont},
    tick label style = {font=\fontsize{7pt}{7.2}\selectfont}
  }

\begin{axis}[
	scale = 1,
    ymode=log,
    xlabel={$E_b/N_0$ [\text{dB}]},
    ylabel={FER},
    ymin=2e-5,
    ymax=1,
    grid=both,
    ymajorgrids=true,
    xmajorgrids=true,
    grid style=dashed,
    thick,
    legend pos=south west,
    title={PSCL($4$,$8$)},
    width=0.5\columnwidth, height=7.0cm,
    legend cell align=left,
]

\addplot[
color=black,
mark=square*,
]
table {
1 0.3669
1.5 0.0987
2 0.0211
2.5 0.00284503
3 0.000544662
};

\addplot[
color=blue,
mark=*,
]
table {
1 0.368
1.5 0.0851
2 0.0153
2.5 0.00204532
3 0.000237736
};

\addplot[
color=red,
mark=star,
]
table {
1 0.434
1.5 0.1082
2 0.0129
2.5 0.00114612
3 5.23939e-05
};

\addplot[
color=brown,
mark=diamond*,
]
table {
1 0.5573
1.5 0.1766
2 0.0268
2.5 0.00176066
3 5.6874e-05
};

\end{axis}
\end{tikzpicture}
	\\
	\ref{legend1kL8P2P4}
  \caption{FER for the transmission of $\mathcal{P}(1024,512)$ under PSCL decoding with $L=8$ and $P\in \{2, 4\}$. Different curves correspond to different design SNR values and no CRC is used.}
  \label{fig:FER1kL8P2P4}
\end{figure}
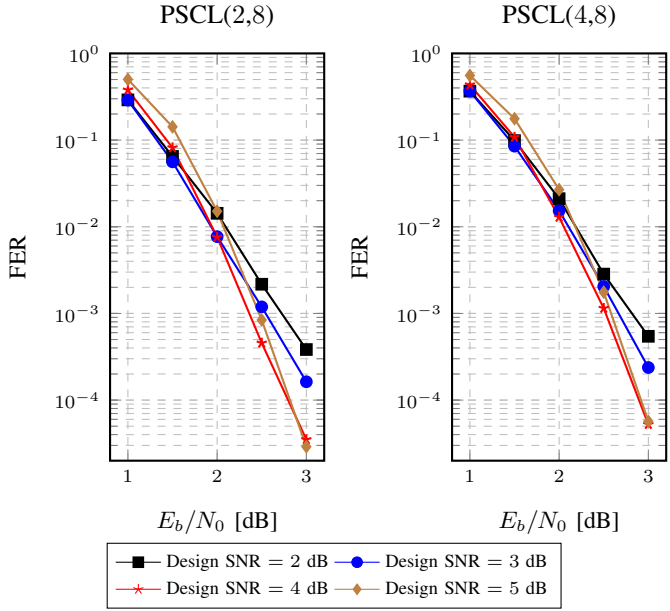

This principle can also be applied to PSCL decoding, as shown in the numerical simulations of Figs. \ref{fig:FER1kL8P2P4}--\ref{fig:FER1kL8comp_arikan}. In particular, different curves of Fig.~\ref{fig:FER1kL8P2P4} refer to different codes (i.e., codes designed for different SNRs). The x-axis corresponds to  SNR of the transmission channel, and y-axis labels the correspondent FER. Note that, at $E_b/N_0 = 3$ dB, a polar code constructed for ${\rm SNR}=5$ dB significantly outperforms the polar code optimized for ${\rm SNR} =3$ dB (which is the SNR of the channel over which the transmission takes place).

Fig.~\ref{fig:FERSNR1kL8P2} plots the error-correction performance of the code as a function of the design SNR. The transmission channel is fixed and it has $E_b/N_0 = 3$ dB. Different curves correspond to different decoding algorithms. For SC decoding, the best error-correction performance is achieved for a code constructed for ${\rm SNR} = 3$ dB. Again, this is to be expected, since the polar code is constructed in order to minimize the error probability under SC decoding. For SCL$(8)$, the best error-correction performance is achieved when the code is constructed for ${\rm SNR} = 5$ dB. For PSCL$(2, 8)$, the optimal design SNR is $4.5$ dB, and for PSCL$(4, 8)$, it is $4$ dB. Note that, as the number of partitions increases, the optimal design SNR for PSCL decoding moves from that for SCL decoding to that for SC decoding. This is due to the fact that, as the number of partitions increases, the error-correction performance of PSCL moves from that of SCL to that of SC (see also Fig.~\ref{fig:FER1kL8comp}). 

Fig.~\ref{fig:FER1kL8comp_arikan} summarizes the gains in the error-correction performance guaranteed by the improved code construction. It also compares the error-correction performance of polar codes with that of an LDPC code of length $1152$ and rate $1/2$ which is used in the WiMAX standard. The dashed curves refer to polar codes in which the design SNR is equal to the SNR of the transmission channel, i.e., polar codes constructed in the standard way. The continuous curves refer to polar codes in which the design SNR is chosen in order to minimize the FER. Both PSCL$(2, 8)$ and PSCL$(4, 8)$ exhibit gains of $0.5$ dB at a target FER of $10^{-3}$. Note that these gains come ``for free'', since changing the design SNR does not affect the computational complexity, the latency or the memory requirement.

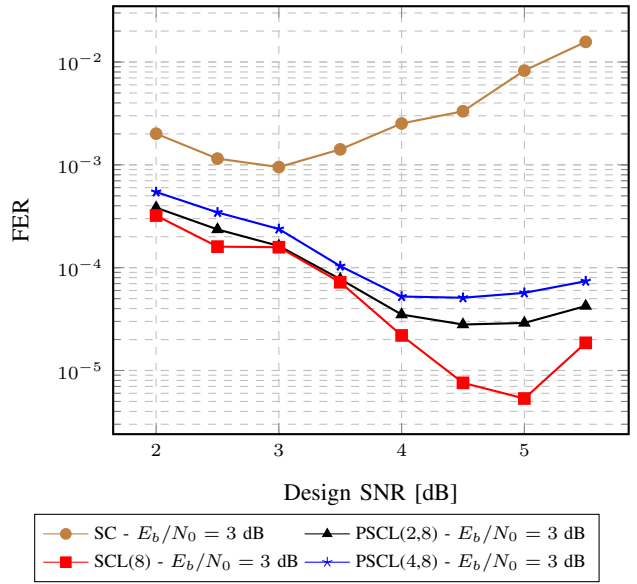
\begin{figure}
  \centering
	\begin{tikzpicture}
  \pgfplotsset{
    label style = {font=\fontsize{9pt}{7.2}\selectfont},
    tick label style = {font=\fontsize{7pt}{7.2}\selectfont}
  }

\begin{axis}[
	scale = 1,
    ymode=log,
    xlabel={Design SNR [\text{dB}]},
    ylabel={FER},
    grid=both,
    ymajorgrids=true,
    xmajorgrids=true,
    grid style=dashed,
    thick,
	legend to name=perf-legend1kp2,
    legend columns=2,
    legend cell align=left,
]

%

\addplot[
color=brown,
mark=*,
]
table {
2 0.002009
2.5 0.00114983
3 0.000951393
3.5 0.00141391
4 0.00252519
4.5 0.00332005
5 0.00825627
5.5 0.0157
};
\addlegendentry{\scriptsize SC - $E_b/N_0 = 3$ dB}

\addplot[
color=black,
mark=triangle*,
]
table {
2 0.000383349
2.5 0.000235479
3 0.000162712
3.5 7.73313e-5
4 3.5e-5
4.5 2.8e-5
5 2.9e-5
5.5 4.23644e-5
};
\addlegendentry{\scriptsize PSCL($2$,$8$) - $E_b/N_0 = 3$ dB}

%

\addplot[
color=red,
mark=square*,
]
table {
2 0.000321263
2.5 0.000160157
3 0.000158153
3.5 7.21165e-05
4 2.1924e-05
4.5 7.58389e-06
5 5.34116e-06
5.5 1.85833e-05
};
\addlegendentry{\scriptsize SCL($8$) - $E_b/N_0 = 3$ dB}

\addplot[
color=blue,
mark=star,
]
table {
2 0.000544662
2.5 0.000344578
3 0.000237736
3.5 0.000103619
4 5.23939e-05
4.5 5.10014e-05
5 5.6874e-05
5.5 7.38657e-05
};
\addlegendentry{\scriptsize PSCL($4$,$8$) - $E_b/N_0 = 3$ dB}

\end{axis}
\end{tikzpicture}
	\\
	\ref{perf-legend1kp2}
  \caption{FER for PSCL($2$,$8$), PSCL($4$,$8$), SCL($8$), and SC decoding of $\mathcal{P}(1024,512)$ as a function of the design SNR. Different curves correspond to different decoding algorithms and no CRC is used. Note that the transmission takes place over a BAWGN with ${\rm SNR} = 3$ dB.}
  \label{fig:FERSNR1kL8P2}
\end{figure}

\begin{figure}
  \centering
	\begin{tikzpicture}
  \pgfplotsset{
    label style = {font=\fontsize{9pt}{7.2}\selectfont},
    tick label style = {font=\fontsize{7pt}{7.2}\selectfont}
  }

\begin{axis}[
	scale = 1,
    ymode=log,
    xlabel={$E_b/N_0$ [\text{dB}]},
    ylabel={FER},
    grid=both,
    ymajorgrids=true,
    xmajorgrids=true,
    grid style=dashed,
    thick,
    legend pos=south west,
    legend cell align=left,
]

\addplot[
color=brown,
mark=*,
]
table {
1 0.70939
1.5 0.32152
2 0.0795
2.5 0.0136
3 0.000951393
};
\addlegendentry{\scriptsize SC}

\addplot[
dashed,
color=red,
mark=square*,
]
table {
1 0.2564
1.5 0.0488
2 0.0094
2.5 0.00165229
3 0.000365922
};
\addlegendentry{\scriptsize SCL($8$) - Standard}

\addplot[
color=red,
mark=square*,
]
table {
1 0.2511
1.5 0.0416
2 0.00407365
2.5 0.000127306
3 5.34116e-06
};
\addlegendentry{\scriptsize SCL($8$) - Improved}

\addplot[
dashed,
color=black,
mark=triangle*,
]
table {
1 0.291
1.5 0.0648
2 0.0143
2.5 0.00217174
3 0.000383349
};
\addlegendentry{\scriptsize PSCL($2$,$8$) - Standard}

\addplot[
color=black,
mark=triangle*,
]
table {
1 0.2827
1.5 0.0561
2 0.00685166
2.5 0.000455651
3 2.8e-5
};
\addlegendentry{\scriptsize PSCL($2$,$8$) - Improved}

\addplot[
dashed,
color=blue,
mark=star,
]
table {
1 0.3669
1.5 0.0987
2 0.0211
2.5 0.00284503
3 0.000544662
};
\addlegendentry{\scriptsize PSCL($4$,$8$) - Standard}

\addplot[
color=blue,
mark=star,
]
table {
1 0.3669
1.5 0.0851
2 0.0121
2.5 0.000895103
3 5.10014e-05
};
\addlegendentry{\scriptsize PSCL($4$,$8$) - Improved}

\addplot[
color=green!50!black,
mark=pentagon*,
]
table {
1.00 7.06e-01
1.50 1.64e-01
2.00 7.32e-03
2.50 2.45e-04
3.00 3.00e-05
};
\addlegendentry{\scriptsize WiMAX LDPC}

\end{axis}
\end{tikzpicture}
  \caption{FER performance comparison between the standard and the improved construction of the code $\mathcal{P}(1024,512)$. Different curves correspond to different decoding algorithms and no CRC is used. Note that gains of $0.5$ dB are obtained at a target FER of $10^{-3}$ for PSCL$(2, 8)$ and PSCL$(4, 8)$. The error-correction performance of the WiMAX LDPC code of length $1152$ and rate $1/2$ is also plotted for comparison.}
  \label{fig:FER1kL8comp_arikan}
\end{figure}
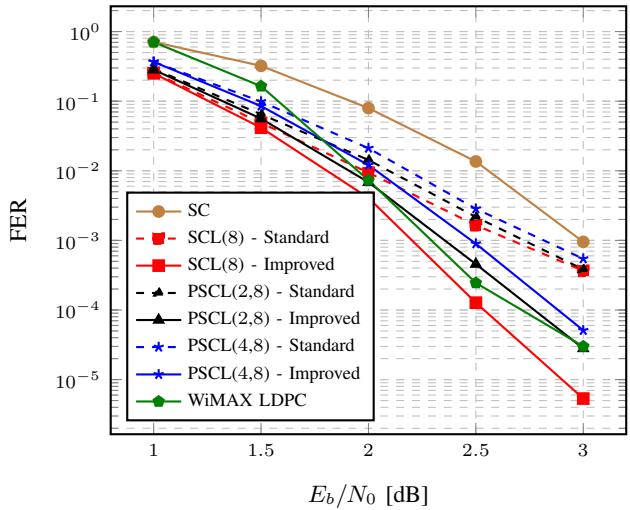


\section{CRC Selection Scheme} \label{sec:crc}

In this section, we discuss the use of CRC bits and we present a strategy to choose the length of the CRC that minimizes the FER. 

The error-correction performance under SCL decoding is lower bounded by that under MAP decoding. However, in scenarios of interest in practical applications, even the performance of the MAP decoder is not satisfactory compared to state-of-the-art coding schemes, such as LDPC codes. In order to address this issue, it was shown in \cite{TVa15} that, by adding a CRC, the error-correction performance under SCL decoding significantly outperforms that under MAP decoding with no CRC, and it is comparable to that of state-of-the-art LDPC codes.

Denote by $C$ the length of the CRC. In order to send $K$ bits of information for $N$ channel uses, we need to use a code of block length $N$ and rate $(K+C)/N$. More specifically, let us focus on the PSCL decoder with $P$ partitions. Consider the vectors $\mathbf{c} = \{c_1,c_2,\ldots,c_{P}\}$ and $\mathbf{k} = \{k_1,k_2,\ldots,k_{P}\}$, where $c_p$ is the length of the CRC concatenated to the $p$-th partition and $k_p$ represents the number of information bits associated to the $p$-th partition. Clearly,
\begin{equation}
\sum_{p = 1}^{P} c_p = C,\qquad\quad \sum_{p = 1}^{P} k_p = K\text{.}
\label{eq:kSum}
\end{equation}
Then, in the $p$-th partition, we select the most reliable $k_p+c_p$ bits and use the first $k_p$ to store the information bits and the last $c_p$ to store the CRC bits.

On the one hand, if $c_p$ is too small, then some of the incorrect paths might pass the CRC and one of these incorrect paths might be transmitted to the next partition, thus causing an error. On the other hand, if $c_p$ is too large, then the positions chosen to store the CRC might not be sufficiently reliable. Hence, there is a trade-off in the choice of the vector $\mathbf{c}$.

We propose a \emph{successive CRC assignment} strategy that can be described as follows. For $p\in \{1, \ldots, P\}$, we evaluate numerically the FER of the $p$-th partition as a function of $c_p$, assuming that the previous partitions were decoded correctly. Then, we choose the value of $c_p$ that minimizes the FER. In this way, the FER of the PSCL algorithm is minimized. 

Fig.~\ref{fig:1kP220dB} and Fig.~\ref{fig:1kP420dB} consider the transmission of the polar code $\mathcal{P}(1024,512)$ over a channel with ${\rm SNR}=2$ dB and represent the error-correction performance of the $P$ partitions of the PSCL decoder as a function of the length of the CRC for $P=2$ and $P=4$, respectively. The design SNR of the code is equal to the SNR of the transmission channel. For PSCL($2$,$8$), partition $1$ and partition $2$ achieve the best error-correction performance with a CRC of length $4$ and $7$ respectively. Therefore, we conclude that PSCL($2$,$8$) has optimal performance when a CRC of length $4$ is concatenated to the first partition and a CRC of length $7$ is concatenated to the second. Similarly, for PSCL($4$,$8$), partitions $1$, $2$, $3$, and $4$ achieve the best error-correction performance with CRCs of length $2$, $4$, $7$, and $4$, respectively. Therefore, PSCL($4$,$8$) has optimal performance with $\mathbf{c} = \{2,4,7,4\}$.

Fig.~\ref{fig:FER1kL8comp_crc} summarizes the gains in the error-correction performance guaranteed by the successive CRC assignment strategy and also provides a comparison with the LDPC code of length $1152$ and rate $1/2$ which is used in the WiMAX standard. In particular, we consider the SCL($8$), PSCL($2$,$8$), and PSCL($4$,$8$) decoders and we compare the successive CRC assignment scheme with the CRC lengths chosen in \cite{hashemi_PSCL}. All the algorithms exhibit gains of about $1/4$ dB at the target FER of $10^{-3}$. It is worth mentioning that, in order to further improve the performance, one can also optimize the CRC generating polynomial for each partition as discussed in \cite{ZLPP17} for SCL decoders.

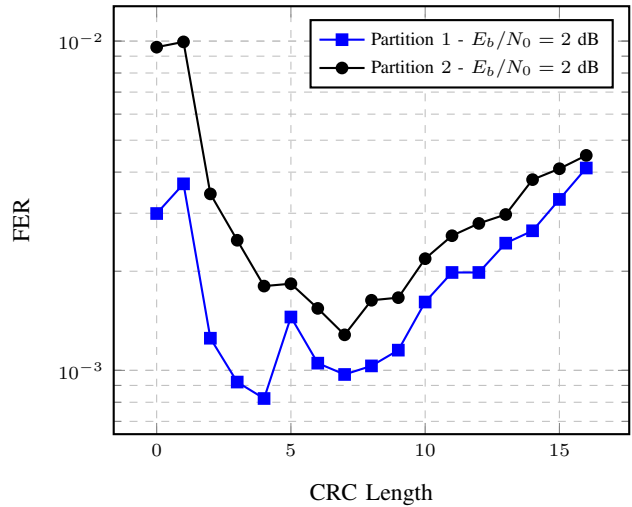
\begin{figure}
  \centering
	\begin{tikzpicture}
  \pgfplotsset{
    label style = {font=\fontsize{9pt}{7.2}\selectfont},
    tick label style = {font=\fontsize{7pt}{7.2}\selectfont}
  }

\begin{axis}[
	scale = 1,
    ymode=log,
    xlabel={CRC Length},
    ylabel={FER},
    grid=both,
    ymajorgrids=true,
    xmajorgrids=true,
    grid style=dashed,
    thick,
	legend pos=north east,
	legend cell align=left,
]

\addplot[
color=blue,
mark=square*,
]
table {
0 0.00299479
1 0.0036859
2 0.001252
3 0.000921474
4 0.000821314
5 0.00145232
6 0.00105168
7 0.000971554
8 0.00103165
9 0.00115184
10 0.00161258
11 0.00198317
12 0.00198317
13 0.00243389
14 0.00265425
15 0.00330529
16 0.00411659
};
\addlegendentry{\scriptsize Partition $1$ - $E_b/N_0 = 2$ dB}

\addplot[
color=black,
mark=*,
]
table {
0 0.00958534
1 0.00994591
2 0.0034355
3 0.00248397
4 0.00180288
5 0.00183293
6 0.00154247
7 0.00128205
8 0.00163261
9 0.00166266
10 0.00218349
11 0.0025641
12 0.00279447
13 0.00297476
14 0.00379607
15 0.00409655
16 0.0044972
};
\addlegendentry{\scriptsize Partition $2$ - $E_b/N_0 = 2$ dB}

%

\end{axis}
\end{tikzpicture}
  \caption{Effect of the CRC length on the FER for PSCL($2$,$8$) decoding of $\mathcal{P}(1024,512)$. Different curves correspond to different partitions. The SNR of the transmission channel and the design SNR of the code are equal to $2$ dB.}
  \label{fig:1kP220dB}
\end{figure}

\begin{figure}
  \centering
	\begin{tikzpicture}
  \pgfplotsset{
    label style = {font=\fontsize{9pt}{7.2}\selectfont},
    tick label style = {font=\fontsize{7pt}{7.2}\selectfont}
  }

\begin{axis}[
	scale = 1,
    ymode=log,
    xlabel={CRC Length},
    ylabel={FER},
    grid=both,
    ymajorgrids=true,
    xmajorgrids=true,
    grid style=dashed,
    thick,
	legend pos=north west,
	legend cell align=left,
]

\addplot[
color=blue,
mark=square*,
]
table {
0 0.00225361
1 0.00205329
2 0.00112179
3 0.00390625
4 0.00114183
5 0.00243389
6 0.00128205
7 0.00169271
8 0.00295473
9 0.00429688
10 0.00645032
11 0.0119792
12 0.0141126
13 0.0216446
14 0.0260417
15 0.0313702
16 0.0448718
};
\addlegendentry{\scriptsize Partition $1$ - $E_b/N_0 = 2$ dB}

\addplot[
color=black,
mark=*,
]
table {
0 0.00285457
1 0.00340545
2 0.00115184
3 0.00117187
4 0.00103165
5 0.00127204
6 0.00116687
7 0.0013722
8 0.00136218
9 0.00168269
10 0.00228365
11 0.00282452
12 0.00289463
13 0.00366587
14 0.00427684
15 0.00565905
16 0.00699119
};
\addlegendentry{\scriptsize Partition $2$ - $E_b/N_0 = 2$ dB}

\addplot[
color=red,
mark=star,
]
table {
0 0.00688101
1 0.00703125
2 0.00232372
3 0.00219351
4 0.00115184
5 0.00186298
6 0.00111679
7 0.00105168
8 0.00180288
9 0.00152244
10 0.00177284
11 0.00193309
12 0.00233373
13 0.00305489
14 0.00433694
15 0.00479768
16 0.00583934
};
\addlegendentry{\scriptsize Partition $3$ - $E_b/N_0 = 2$ dB}

\addplot[
color=brown,
mark=diamond*,
]
table {
0 0.00795272
1 0.00975561
2 0.00398638
3 0.0033153
4 0.00296474
5 0.00381611
6 0.00414663
7 0.00587941
8 0.00763221
9 0.0109275
10 0.0158854
11 0.0220453
12 0.0323718
13 0.049349
14 0.0669972
15 0.0979467
16 0.139513
};
\addlegendentry{\scriptsize Partition $4$ - $E_b/N_0 = 2$ dB}

\end{axis}
\end{tikzpicture}
  \caption{Effect of the CRC length on the FER for PSCL($4$,$8$) decoding of $\mathcal{P}(1024,512)$. Different curves correspond to different partitions. The SNR of the transmission channel and the design SNR of the code are equal to $2$ dB.}
  \label{fig:1kP420dB}
\end{figure}

\begin{figure}
  \centering
	\begin{tikzpicture}
  \pgfplotsset{
    label style = {font=\fontsize{9pt}{7.2}\selectfont},
    tick label style = {font=\fontsize{7pt}{7.2}\selectfont}
  }

\begin{axis}[
	scale = 1,
    ymode=log,
    xlabel={$E_b/N_0$ [\text{dB}]},
    ylabel={FER},
    grid=both,
    ymajorgrids=true,
    xmajorgrids=true,
    grid style=dashed,
    thick,
    legend pos=south west,
    legend cell align=left,
]

\addplot[
dashed,
color=red,
mark=square*,
]
table {
1 0.5245
1.5 0.1212
2 0.0102
2.5 0.000372795
3 1.05644e-5
};
\addlegendentry{\scriptsize SCL($8$) - CRC($32$) \cite{hashemi_PSCL}}

\addplot[
color=red,
mark=square*,
]
table {
1 0.2212
1.5 0.0265
2 0.00148566
2.5 3.85113e-05
3 1.03624e-6 
};
\addlegendentry{\scriptsize SCL($8$) - CRC($7$)}

\addplot[
dashed,
color=black,
mark=triangle*,
]
table {
1 0.4905
1.5 0.1072
2 0.0073
2.5 0.000278968
3 8.32652e-6
};
\addlegendentry{\scriptsize PSCL($2$,$8$) - CRC($16$,$16$) \cite{hashemi_PSCL}}

\addplot[
color=black,
mark=triangle*,
]
table {
1 0.243
1.5 0.0318
2 0.00205398
2.5 8.38378e-5
3 3.6e-6
};
\addlegendentry{\scriptsize PSCL($2$,$8$) - CRC($4$,$7$)}

\addplot[
dashed,
color=blue,
mark=star,
]
table {
1 0.4792
1.5 0.1156
2 0.0134
2.5 0.000966184
3 7.16816e-5
};
\addlegendentry{\scriptsize PSCL($4$,$8$) - CRC($8$,$8$,$8$,$8$) \cite{hashemi_PSCL}}

\addplot[
color=blue,
mark=star,
]
table {
1 0.3075
1.5 0.0589
2 0.0054
2.5 0.000460889
3 3.48187e-5
};
\addlegendentry{\scriptsize PSCL($4$,$8$) - CRC($2$,$4$,$7$,$4$)}

\addplot[
color=green!50!black,
mark=pentagon*,
]
table {
1.00 7.06e-01
1.50 1.64e-01
2.00 7.32e-03
2.50 2.45e-04
3.00 3.00e-05
};
\addlegendentry{\scriptsize WiMAX LDPC}

\end{axis}
\end{tikzpicture}
  \caption{FER performance comparison for the transmission of $\mathcal{P}(1024,512)$ between the proposed successive CRC assignment scheme (solid) and the CRC lengths chosen in \cite{hashemi_PSCL} (dashed). Different curves correspond to different decoding algorithms. In the successive CRC assignment scheme, the lengths of the CRCs are optimized separately for each value of the SNR of the transmission channel. Note that gains of about $1/4$ dB are obtained at a target FER of $10^{-3}$ for all the decoding algorithms. The error-correction performance of the WiMAX LDPC code of length $1152$ and rate $1/2$ is also plotted for comparison.}
  \label{fig:FER1kL8comp_crc}
\end{figure}
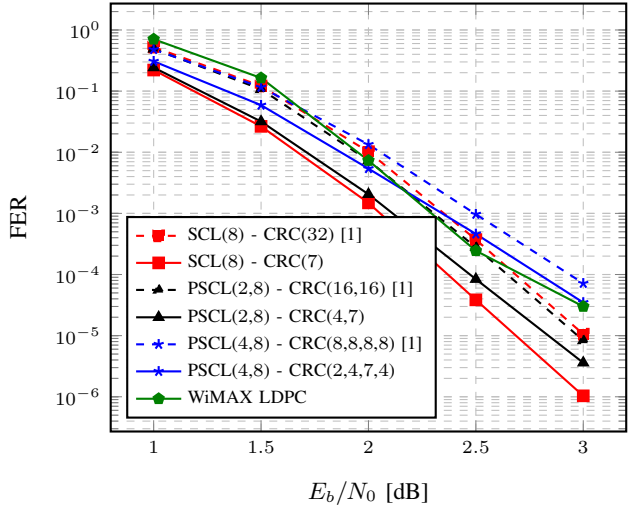


\section{List Size Requirement for MAP Decoding}\label{sec:list}

In this section, we present a simple upper bound on the list size of the SCL decoder that guarantees the same error-correction performance as the MAP decoder.

Recall that MAP decoding of a code of length $N$ with $K$ information bits requires finding the most reliable codeword out of the $2^{K}$ possible codewords. Note also that SCL decoding with list size $2^K$ provides a list of $2^K$ codewords from which the most reliable one is selected. Therefore, SCL($2^K$) is equivalent to MAP decoding.
For polar codes, in order for SCL to be equivalent to MAP, Theorem~\ref{th:SCLMAP} asserts  that the list size can be chosen to be much smaller than $2^K$. As we will argue shortly, the result of Theorem~\ref{th:SCLMAP} can be particularly useful when low-rate polar codes are deployed. 

\begin{MyTheorem}
Consider the transmission of $\mathcal{P}(N,K)$ over a BEC and let $M$ denote the number of information bits located after the \emph{last} frozen bit. Then, SCL decoding with $L=2^{K-M}$ is equivalent to MAP decoding.
\label{th:SCLMAP}
\end{MyTheorem}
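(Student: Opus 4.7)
The plan is to exploit the very rigid structure of SC decoding on the BEC, where every LLR computed in the tree of Fig.~\ref{fig:pcDec} takes values in $\{-\infty, 0, +\infty\}$. My first step would be to observe that the erasure pattern of these LLRs is \emph{path-independent}: the $f$-function does not involve $\beta$, and $g(a,b,\beta) = 0$ iff $a=0$ and $b=0$, again independent of $\beta$. Hence the set of positions at which the leaf LLR $\alpha_0^i$ equals $0$ is determined by the channel output alone and is the same for every path in the SCL tree.

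Second, I would classify the effect of each decoded bit on the SCL list using the path metric \eqref{eq:PMHF}. Every path metric takes the value $0$ (the path is \emph{consistent} with the received word) or $+\infty$ (it is \emph{inconsistent}). At an information bit with non-erased LLR only the consistent child of each active path survives; at an information bit with erased LLR each active path splits into two consistent children of equal metric; at a frozen bit with erased LLR nothing changes; and at a frozen bit with non-erased LLR, paths whose sign disagrees with the frozen value $0$ become inconsistent and are discarded. A consistent path is therefore pruned only when there are more than $L$ consistent paths alive at the same time.

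The core counting step is then: after processing position $N-M-1$ (the last frozen bit), the number of consistent paths is at most $2^{K-M}$, because only the $K-M$ information bits lying in $\{0,\dots,N-M-1\}$ can cause branching. With $L = 2^{K-M}$ no consistent path is pruned in this ``phase~1''. In ``phase~2'' (the last $M$ positions, all information bits) there are no frozen bits left to kill a path, so every path that enters phase~2 stays consistent for the rest of the decoding. Since the transmitted codeword is always consistent and was retained through phase~1, the SCL list is non-empty at every step of phase~2 and its elements all have path metric $0$.

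The conclusion is immediate. SCL outputs a path of minimum metric, which must be one of the consistent codewords, and therefore
\[
P(\hat X_{\mathrm{SCL}} = X \mid Y = y) \;=\; 1/c(y),
\]
where $c(y)$ is the number of codewords consistent with $y$. On the BEC, the posterior $P(X\mid y)$ is uniform over those $c(y)$ codewords, so any MAP rule picks one of them and satisfies $P(\hat X_{\mathrm{MAP}} = X \mid Y = y) = 1/c(y)$ as well. Averaging over $y$ gives equal block error probabilities. The main obstacle is really just the bookkeeping in the first three steps: verifying the path-independence of the erasure pattern, classifying the four possible SCL transitions on the BEC, and counting information bits through position $N-M-1$.
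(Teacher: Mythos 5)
Your proof is correct and rests on the same core argument as the paper's: at most $K-M$ information bits precede the last frozen bit, so at most $2^{K-M}$ equally likely consistent paths can coexist up to that point (hence none is pruned with $L=2^{K-M}$), and after the last frozen bit no path cancellations can occur. The only real difference is in how the equivalence is concluded: the paper conditions on the event that MAP succeeds, so that a unique consistent codeword forces a single surviving path at the last frozen position and no further splits, whereas you allow the list to overflow during the final $M$ positions and argue via the uniform posterior that both decoders are correct with conditional probability $1/c(y)$ --- a slightly more complete treatment, since it also covers the case where MAP must break a tie among several consistent codewords instead of counting every tie as an error. Your additional bookkeeping (path-independence of the erasure pattern, the four transition types) is sound and simply makes explicit what the paper leaves implicit.
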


\begin{proof}
Recall that for the case of the transmission over a BEC, the synthetic channels seen by the input bits are also BECs. Hence, the SCL decoder doubles the number of paths every time that one of the information bits cannot be decoded and all these paths are equally likely. Furthermore, when the SCL decoder encounters a frozen position, it tries to decode such a position and, if this is possible, it cancels all the paths that do not agree with the value of this position.

Run the SCL decoder until the last frozen bit. As there are at most $K-M$ information bits until this position, there are at most $2^{K-M}$ equally likely paths. Therefore, the SCL decoder will not exceed its list size. Assume now that the MAP decoder does not fail. This means that there is only one remaining path at the end of the decoding process. Since there are no more frozen bits, no more path cancellations can occur. Therefore, only one path must be available at this point and no new paths will be created while decoding the last $M$ positions. As a result, the SCL decoder also succeeds.
\end{proof}
While we have proved Theorem~\ref{th:SCLMAP} for the BEC, numerical experiments suggest that the result holds for other important channels such as the BAWGN.  

Let us illustrate the importance of this result through the analysis of a practical setting. Polar codes have been selected for the transmission over the control channel in 5G.  Such a task requires codes of short lengths ($\leq 1024$) with rates as low as $1/12$ \cite{3gpp_shortCode}. Consider a polar code of length $128$ and rate $1/12$, which is optimized for SNR $=2$ dB. This code has $10$ information bits and $6$ of them are located after the last frozen bit. Therefore, SCL($16$) can provide the same result as the MAP decoder. This is illustrated in Fig.~\ref{fig:128MAP} when the transmission takes place over the BAWGN channel.

\begin{figure}
  \centering
	\begin{tikzpicture}
  \pgfplotsset{
    label style = {font=\fontsize{9pt}{7.2}\selectfont},
    tick label style = {font=\fontsize{7pt}{7.2}\selectfont}
  }

\begin{axis}[
	scale = 1,
    ymode=log,
    xlabel={$E_b/N_0$ [\text{dB}]},
    ylabel={FER},
    ymin=0.004,
    ymax=0.11,
    grid=both,
    ymajorgrids=true,
    xmajorgrids=true,
    grid style=dashed,
    thick,
    legend pos=south west,
    legend cell align=left,
]

\addplot[
color=blue,
mark=*,
only marks,
]
table {
1 0.0658
1.5 0.0373
2 0.0239
2.5 0.013095
3 0.00658025
};
\addlegendentry{\scriptsize SCL($16$)}

\addplot[
color=red,
]
table {
1 0.0658
1.5 0.0373
2 0.0239
2.5 0.013095
3 0.00658025
};
\addlegendentry{\scriptsize SCL($1024$)}

\end{axis}
\end{tikzpicture}
  \caption{FER performance comparison between SCL($1024$) and SCL($16$) for the transmission of $\mathcal{P}(128,10)$ on a BAWGN channel. Note that, since there are $10$ information bits in the code, SCL($1024$) is equivalent to MAP decoding. It can be seen that, since $6$ of the information bits are located after the last frozen bit, SCL($16$) results in the same error-correction performance as the MAP decoder.}
  \label{fig:128MAP}
\end{figure}
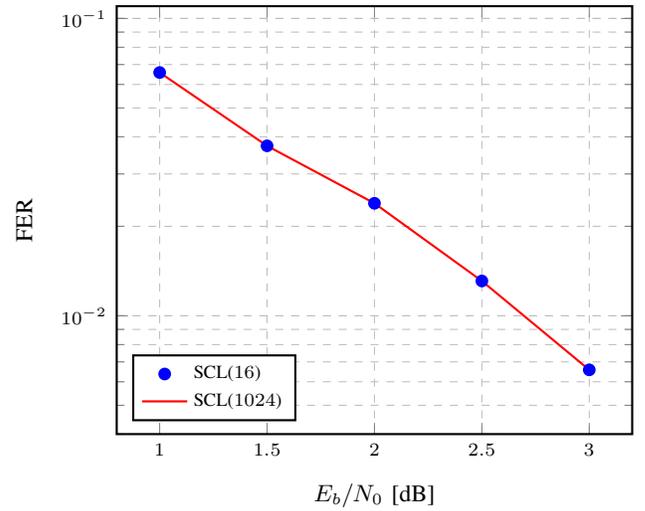

Recall that PSCL results in polar codes of length $N/P$ that are decoded with SCL. Consider the decoding of a polar code of length $1024$ and rate $1/12$ with PSCL with $P=2$. The code is constructed for SNR $=2$ dB. The first partition contains $7$ information bits and $3$ of them are located after the last frozen bit. Therefore, SCL($16$) results in the same error performance as a MAP decoder over that partition. Suppose now that we decode the same code with PSCL with $P=4$. Then, the first partition contains only frozen bits; the second partition contains $7$ information bits and $3$ are located after the last frozen bit; and the third partition contains $9$ information bits and $3$ of them are located after the last frozen bit.
Therefore, for the second and the third partitions, SCL($16$) and SCL($64$) are respectively equivalent to the MAP decoder. In conclusion, while running PSCL decoding, it is possible to perform MAP decoding on some of the partitions with practical values of the list size. 


\section{Conclusions} \label{sec:concl}

Partitioned successive-cancellation list (PSCL) decoding of polar codes has been shown to be a good candidate for applications requiring low memory usage and small area occupation when implemented on hardware. The focus of this paper concerns the finite length error-correction performance of PSCL.
First of all, we present a \emph{novel code construction} inspired by the interpolation method of \cite{MoHaUr14}, which results in a remarkable performance improvement. Then, we propose a \emph{successive CRC selection assignment} scheme to choose the lengths of the CRCs assigned to the different partitions. This approach provides the optimal error-correction performance among all possible CRC assignments. Finally, we prove a simple upper bound on the list size required by the successive-cancellation list (SCL) decoder in order to achieve an error performance equal to that of the optimal MAP decoder. We further argue that in certain practical applications where low-rate codes are used, this upper bound becomes effective in choosing the appropriate list sizes to perform MAP decoding with SCL.

Our analysis also indicates that there is still a gap between the error-correction performance of PSCL and that of SCL in certain high-SNR scenarios. Future work includes reducing this gap by considering more sophisticated PSCL algorithms. 


\section*{Acknowledgment}
The work of M. Mondelli is supported by an Early Postdoc. Mobility fellowship from the Swiss National Science Foundation. This work was done while S. H. Hassani was visiting the Simons Institute for the Theory of Computing, UC Berkeley, as a research fellow. The authors would like to thank Mr. Furkan Ercan of McGill University for providing the LDPC results.

\newcommand{\SortNoop}[1]{}



\end{document}